\newcounter{myctr}
\def\myitem{\refstepcounter{myctr}\bibfont\noindent\ifnum\themyctr>9\else\phantom{0}\fi\hangindent17pt\themyctr.\enskip}
\begin{document}

\markboth{F. Benatti, R. Floreanini}
{Entangled Fermions}

\catchline{}{}{}{}{}

\title{ENTANGLEMENT AND ALGEBRAIC INDEPENDENCE IN FERMION SYSTEMS}

\author{FABIO BENATTI}

\address{Dipartimento di Fisica, Universit\`a di Trieste,
34151 Trieste, Italy\\
Istituto Nazionale di Fisica Nucleare, Sezione di Trieste,
34151 Trieste, Italy\\
benatti@ts.infn.it}

\author{ROBERTO FLOREANINI}

\address{Istituto Nazionale di Fisica Nucleare, Sezione di Trieste,
34151 Trieste, Italy\\
florean@ts.infn.it}

\maketitle

\begin{history}
\received{Day Month Year}
\revised{Day Month Year}
\end{history}

\begin{abstract}
In the case of systems composed of identical particles, a typical instance in quantum statistical mechanics, the standard approach to separability and entanglement ought to be reformulated and rephrased in terms of correlations between operators from subalgebras localized in spatially disjoint regions. While this algebraic approach is straightforward for bosons, in the case of fermions it is subtler since one has to distinguish between micro-causality, that is the anti-commutativity of the basic creation and annihilation operators, and algebraic independence that is the commutativity of local observables. We argue that a consistent algebraic formulation of separability and entanglement should be compatible with micro-causality rather than with algebraic independence.
\end{abstract}

\keywords{Entanglement; Identical particles; Fermionic Systems}

\section{Introduction}	

In trying to apply the standard definitions of separability and entanglement to systems of identical
particles, one immediately faces a problem: the indistinguishability of the system constituents
conflicts with Hilbert space tensor product structure on which these notions are based.
The point is that the particles are identical and therefore they can not be singly addressed,
nor can their individual properties measured:
only collective, global system operators are in fact admissible,
experimentally accessible observables\footnote{Entanglement
in many-body systems has been widely discussed in the recent literature, {\it e.g.} see
\cite{Schliemann}-\cite{Modi}; however, for the reasons just pointed out,
only a limited part of those results are really applicable to the case of identical particle systems.} \cite{Feynman,Sakurai}.

In this context, the usually adopted
definition of separability based on the particle aspect of first quantization being too restrictive,
one resorts to the second quantization language proper to quantum many-body theory where the primary objects are the
algebras of operators rather than states in a Hilbert space \cite{Strocchi}.
The new point of view towards separability and entanglement has been advocated before \cite{Zanardi}-\cite{Viola2},
but formalized only recently \cite{Benatti1}-\cite{Benatti5} with particular attention on bipartite entanglement,
aiming at specific applications to quantum metrology.

In the following, we shall consider the second quantized (algebraic) approach to the notions of separability and entanglement
in the case of systems composed of fermions whose elementary creation and annihilation operators anti-commute.
We shall show that the canonical anti-commutation relations in connection with the properties of locality and commutativity
of the system observables make the theory of fermion entanglement even richer than in the case of bosonic systems.
Indeed, while anti-commutativity of creation and annihilation operators of orthogonal fermionic states corresponds to the axiom
of \textbf{micro-causality} in axiomatic quantum field theory~\cite{Strocchi,Streater,Strocchi2}, locality has to do with observables localized within
regions that forbid the possibility of interference between their respective measurements: these observables must then commute, a property known in the literature as \textbf{algebraic independence}~\cite{Summers}.
While for quantum systems consisting of bosons locality is compatible  with micro-causality as the creation and annihilation operators of single particle states  obey the canonical commutation relations, in the case of fermions is not so and this fact clearly emerges
when one wants to extend to such systems the standard notions of entanglement and separability.
In the following we define entanglement and separability in terms of micro-causality rather than basing on algebraic independence and argue that this a more consistent choice from a physical point of view.

\section{Entanglement in Fermi systems}

We shall consider a many-body system consisting of a fixed number $N$ of fermions each of which can be found in $M$ different modes, $i=1,2,\ldots,M\geq N$: the choice of modes is highly non-unique as they correspond  to the the orthogonal components of any orthonormal basis in the single particle ($M$-dimensional) Hilbert space, $M$ being possibly infinite. The second quantization description of such a system associates to each mode $i$ creation and annihilation
operators, $a_i^\dagger$, $a_i$~\cite{Strocchi} obeying the Canonical Anti-commutation relations (CAR)
\begin{equation}
\{a_i,\,a^\dagger_j\}\equiv a_i\,a^\dagger_j+a^\dagger_j\, a_i=\delta_{ij}\ ,\quad\{a_i,\,a_j\}=
\{a_i^\dagger,\,a^\dagger_j\}=\, 0\ .
\label{1}
\end{equation}
The most natural Hilbert space $\cal H$ associated to this system is the Fock space spanned by the states obtained by applying creation operators to the
vacuum vector $\vert 0\rangle$ ($a_i\vert 0\rangle=0$):
\begin{equation}
|n_1, n_2,\ldots,n_M\rangle=
(a_1^\dagger)^{n_1}\, (a_2^\dagger)^{n_2}\, \cdots\, (a_M^\dagger)^{n_M}\,|0\rangle\ ,
\label{2}
\end{equation}
the integers $n_1, n_2, \ldots, n_M$ representing the occupation numbers of the different modes;
due to (\ref{1}), they can take only the two values 0 or 1.
The set of polynomials in all creation and annihilation operators,
$\{a^\dagger_i,\, a_i\, |\,i=1,2,\ldots, M\}$,
form an algebra ${\cal A}$ of bounded operators acting on $\cal H$; the observables of the systems are part of this algebra.

In this setting the notions of separability and entanglement cannot just be extrapolated from the case of distinguishable particles.
In the case of two non-identical standard qubits these notions are connected  with the natural Hilbert space product structure ${\cal H}=\mathbb{C}^2\otimes\mathbb{C}^2$ and
the corresponding algebraic product structure for the space of the associated observables
${\cal A}=M_2(\mathbb{C})\otimes M_2(\mathbb{C})$, with $M_2(\mathbb{C})$ the set of $2\times2$ complex matrices.
Both are related to the individual particle picture whereby one is able to identify the two qubits and the local observables as those
taking the form
\begin{equation}
\label{3}
A\otimes B=(A\otimes 1)\,(1\otimes B)\ ,
\end{equation}
where $A$ is an observable of the first qubit, while $B$ that for the second one.
In other terms, local observables for  two-qubit systems are characterized by being tensor products of observables pertaining each to one of the two parties: they commute and, following \cite{Summers}, we term them as \textbf{algebraically independent}.

Consider instead a system composed by two fermions that can occupy two modes, and thus described
by the set of operators $(a_1, a_1^\dagger, a_2, a_2^\dagger)$:
the single particle Hilbert space is still $\mathbb{C}^2$; however, the CAR in (\ref{1}) make the total Hilbert space $\cal H$ consist of the vector $a^\dag_1a^\dag_2\vert 0\rangle$ only and the Fermi algebra $\cal A$  linearly generated by the identity
together with at most second order monomials in $a_1,a^\dag_1$ and $a_2,a_2^\dag$.
Clearly, the particle Hilbert space tensor product structure is lost as well as the locality of observables expressed
by the tensor product structure as in (\ref{3}).

The way out is provided  by identifying local observables with products of commuting observables that is with observables that
can be simultaneously and independently measured  without the need of attaching them to any particular particle

In quantum many-body theory, a most natural identification of local observables is in terms of self-adjoint operators supported within disjoint volumes, say a finite volume $V_1$ and its complement $V_2=\mathbb{R}^3\setminus V_1$ (disjoint apart from their common border). Then, one considers the two
subalgebras ${\cal A}_{1,2}$ generated by creation and annihilation operators $a(f_1)\,,\,a^\dag(f_1)$ and $a(f_2)\,,\, a^\dag(f_2)$ of normalized single particle states $f_1$, $f_2$ supported within the two volumes: $a^\dag(f_{1,2})\vert 0\rangle=\vert f_{1,2}\rangle$.

In the case of bosons, the Canonical Commutation Relations (CCR) yield $[a(f_1)\,,\,a(f_2)]=[a^\dag(f_1)\,,\,a^\dag(f_2)]=0$
and
\begin{equation}
\label{fermifields1}
[a(f_1)\,,\,a^\dag(f_2)]=\langle f_1\vert f_2\rangle=0\ ,\quad\forall\ f_{1,2}\ :\quad \hbox{supp}(f_1)\subseteq V_1 ,\ \hbox{supp}(f_2)\subseteq V_2\ .
\end{equation}
The vanishing commutators provide a non-relativistic expression of the so-called bosonic \textbf{micro-causality};  in  relativistic quantum field theory, micro-causality means that bosonic fields in causally-disjoint space-time regions cannot influence each other and must then commute~\cite{Streater,Strocchi2}.

Things radically change in the case of fermions; in this case, micro-causality demands  the anti-commutators $\{a(f_1)\,,\,a^\dag(f_2)\}$ to vanish.
On the other hand, the algebraic independence of operators is based on their vanishing commutators; in order to check that, one may use the algebraic relation
\begin{equation}
\label{6}
[AB\,,\,C]\,=\,A\,\{B\,,\,C\}-\{A\,,\,C\}\,B\ .
\end{equation}
It then follows that two operators supported in disjoint volumes commute when at least one of them is constructed by means polynomial involving only even powers of creation and annihilation operators.
Therefore, given two sub-algebras ${\cal A}_{1,2}$ of the Fermi algebra $\cal A$, localized within disjoint volumes, on one hand one has the micro-causality condition expressed by the anti-commutativity of the basic creation and annihilation operators,
\begin{equation}
\label{microcaus}
\{a^\#(f_1)\,,\,a^\#(f_2)\}=0 \qquad\forall\ f_{1,2}\ :\quad \hbox{supp}(f_1)\subseteq V_1 ,\ \hbox{supp}(f_2)\subseteq V_2\ ,
\end{equation}
where $a^\#$ stands for $a$ or $a^\dag$.
On the other hand, from the point of view of the algebraic independence of fermionic observables one ought to distinguish the so-called even and odd
components of ${\cal A}_{1,2}$.

\begin{definition}
Let $\Theta$ be the automorphism on the Fermi algebra ${\cal A}$
defined by $\Theta(a_i)=-a_i$, $\Theta(a_i^\dagger)=-a_i^\dagger$
for all $a_i,\ a^\dagger_i\in{\cal A}$.%
The even component $\mathcal{A}^e$ of $\mathcal{A}$ is the subset of elements $A^e\in\mathcal{A}$
such that $\Theta(A^e)=A^e$, while the odd component $\mathcal{A}^o$ of $\mathcal{A}$ consists
of those elements $A^o\in\mathcal{A}$ such that $\Theta(A^o)=-A^o$.
\end{definition}

\noindent
\textbf{Remark 1.}\quad
The even component $\mathcal{A}^e$ is generated by the norm closure of even polynomials
in creation and annihilation operators and is a  subalgebra of $\cal A$, while the odd component $\mathcal{A}^o$ is only a linear space
since the product of two odd elements is even.
Even if self-adjoint, odd elements like
\begin{equation}
\label{fermifields2}
A^o_1=a(f_1)+a^\dag(f_1)\ ,\quad A_2^o=a(f_2)+a^\dag(f_2)\ ,
\end{equation}
with $a^\#(f_{1,2})$ as in~\eqref{microcaus} are not considered to be observable as they are not compatible with superselection rules~\cite{Bartlett,Wick,Moriya}:
for instance, they do not leave the number operators invariant.
Since they do not commute, were they observable, their respective measurements would interfere with each other despite the disjointness of their supports.
In axiomatic relativistic quantum field theory, they are known as unobservable fields: however, with their even powers one constructs operators like energy and currents. These are not only observable, but, if supported within causally separated regions, they also commute and are thus algebraically independent.
\hfill$\Box$
\medskip

The splitting of the whole algebra $\cal A$ into a bipartition consisting of two subalgebras supported within disjoint volumes can be generalized by means of annihilation and creation operators corresponding to different modes.
Indeed, a bipartition of the algebra ${\cal A}$ of a system of $N$ fermions each one capable of $M\geq N$ modes can be given by splitting the collection
of creation and annihilation operators into two disjoint sets, $\{a_i^\dagger,\, a_i\, | i=1,2\ldots,m\}$ and
$\{a_j^\dagger,\, a_j,\, |\, j=m+1,m+2,\ldots,M\}$;
it is thus uniquely determined by the choice of the integer $m$, with $0\leq m \leq M$.%

In order to discuss the consequences of the second-quantization (algebraic) approach we start with the following definitions:

\begin{definition}
\begin{enumerate}
\item
Two subalgebras ${\cal A}_1$, ${\cal A}_2$ of the Fermi lagebra $\cal A$ will be called disjoint if they are generated by the norm-closure of polynomials in annihilation and creation operators of modes belonging to disjoint subsets $I_1$ and $I_2$.
\item
An {\bf algebraic bipartition} of the Fermi  algebra $\cal A$ is any pair
$({\cal A}_1, {\cal A}_2)\subset {\cal A}$ of disjoint subalgebras, with only the identity operator in common, such that
${\cal A}_1 \cup {\cal A}_2={\cal A}$.
\item
An operator of $\cal A$ is said to be
$({\cal A}_1, {\cal A}_2)$-{\bf local}, {\it i.e.} local with respect to
a given bipartition $({\cal A}_1, {\cal A}_2)$, if it is the product $A_1 A_2$ of an element
$A_1$ of ${\cal A}_1$ and another $A_2$ in ${\cal A}_2$.
\end{enumerate}
\end{definition}

In general, a state $\omega$ over the Fermi algebra $\cal A$ is any normalized, positive, linear (expectation) functional $\omega:{\cal A}\mapsto \mathbb{C}$, such that
the average value of any observable $\cal O$ can be expressed as the value taken by $\omega$ on it,
$\langle {\cal O}\rangle=\omega({\cal O})$, the standard example being
$\langle {\cal O}\rangle={\rm Tr}(\rho\,{\cal O})$,
namely an expectation functional given by the trace operation with respect to a density matrices $\rho$.

From the notion of operator locality, a natural definition of state separability (absence of non-local correlations)
and entanglement (presence of non-local correlations) follows \cite{Benatti1}:
\medskip

\noindent
\begin{definition}
\label{def3}
A state $\omega$ on the algebra ${\cal A}$ will be called {\bf separable} with
respect to the bipartition $({\cal A}_1, {\cal A}_2)$ if the expectation $\omega(A_1 A_2)$
of any local operator $A_1 A_2$ can be decomposed into a linear convex combination of
products of expectations:
\begin{equation}
\omega(A_1 A_2)=\sum_k\lambda_k\, \omega_k^{(1)}(A_1)\, \omega_k^{(2)}(A_2)\ ,\qquad
\lambda_k\geq0\ ,\quad \sum_k\lambda_k=1\ ,
\label{4}
\end{equation}
where $\omega_k^{(1)}$ and $\omega_k^{(2)}$ are given states on ${\cal A}$;
otherwise the state $\omega$ is said to be {\sl entangled} with respect the bipartition
$({\cal A}_1, {\cal A}_2)$.%
\end{definition}

\noindent
\textbf{Remark 2.}\quad
It clearly appears from the previous definition that separability or its absence are properties of states of systems of identical particles which strongly depend on the chosen bipartition. Indeed, as already remarked in the introduction, there is no a-priori given algebraic split into system $1$ and system $2$ as in the case of the tensor product of the algebras of two distinguishable particles~\cite{Zanardi}-\cite{Benatti5};
this general observation, often overlooked, is at the origin of
much confusion in the recent literature.\hfill$\Box$

\section{Separable and entangled fermionic states}

For bosonic states \cite{Benatti1}, the two  subalgebras ${\cal A}_1$, ${\cal A}_2$ commute. As already observed, the condition $[A_1,\, A_2]=\,0$ for all $A_i\in{\cal A}_i$, $i=1,2$, encodes at the algebraic level the intuition that entanglement should be connected with the presence of non-classical correlations among
{\sl commuting}, that is algebraically independent, observables.%
\footnote{For this reasons, in dealing with fermion systems, the discussion is often restricted
just to the commuting subalgebras $\mathcal{A}_1^e$, $\mathcal{A}_2^e$
of even operators \cite{Balachandran}}.
However, in the case of fermions, the operators $A_{1,2}$ are only required to satisfy the condition of fermionic micro-causality, namely that they must belong to subalgebras constructed by anti-commuting annihilation and creation operators. In the following we shall clarify the reasons for this choice.

Given the algebraic bipartition $(\mathcal{A}_1,\, \mathcal{A}_2)$,
one can define the even $\mathcal{A}_i^e$ and odd $\mathcal{A}_i^o$ components of the two
subalgebras $\mathcal{A}_i$, $i=1,2$.
Only the operators of the first partition
belonging to the even component $\mathcal{A}_1^e$
commute with any operator of the second partition and, similarly, only the even operators
of the second partition commute with the whole subalgebra $\mathcal{A}_1$.

\subsection{Structure of separable fermionic states}

A crucial   observation is that the decomposition in~\eqref{4} makes sense only when
at least one of the state entering each of the products at the right hand side vanishes on odd elements.
This fact follows from a result~\cite{Araki} whose simple proof we report as it sheds light
upon the constraints posed by anti-commutativity.
\begin{lemma}
Consider a bipartition $(\mathcal{A}_{1},\mathcal{A}_{1})$ of the fermion algebra $\mathcal{A}$
and two states $\omega_1$, $\omega_2$ on $\mathcal{A}$. Then, the linear
functional $\omega$ on $\mathcal{A}$ defined by
$\omega(A_1A_2)=\omega_1(A_1)\,\omega_2(A_2)$ for all $A_1\in\mathcal{A}_1$ and $A_2\in\mathcal{A}_2$ is a
state on $\mathcal{A}$ only if at least one $\omega_i$ vanishes on the odd component of $\mathcal{A}_i$.
\end{lemma}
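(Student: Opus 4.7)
The plan is to argue by contradiction. I assume that neither $\omega_{1}$ vanishes identically on $\mathcal{A}_{1}^{o}$ nor $\omega_{2}$ on $\mathcal{A}_{2}^{o}$, and I exhibit a self-adjoint element $X\in\mathcal{A}$ such that $\omega(X^{2})\notin\mathbb{R}$; this clashes with $\omega$ being a state, since positivity forces $\omega(X^{*}X)\geq 0$ and in particular $\omega(X^{2})\in\mathbb{R}$ whenever $X=X^{*}$.

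First I reduce to self-adjoint odd witnesses. If $\omega_{i}(B_{i})\neq 0$ for some $B_{i}\in\mathcal{A}_{i}^{o}$, writing $B_{i}=H_{i}+iK_{i}$ with $H_{i}=(B_{i}+B_{i}^{*})/2$ and $K_{i}=(B_{i}-B_{i}^{*})/(2i)$, both $H_{i}$ and $K_{i}$ are self-adjoint and still odd (the grading automorphism $\Theta$ is a $*$-map, so the $\mathbb{Z}_{2}$-decomposition is stable under the involution). As $\omega_{i}$ takes real values on self-adjoint elements, at least one of $\omega_{i}(H_{i}),\omega_{i}(K_{i})$ is a non-zero real. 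Thus I may fix $A_{i}^{o}\in\mathcal{A}_{i}^{o}$ with $A_{i}^{o}=(A_{i}^{o})^{*}$ and $\omega_{i}(A_{i}^{o})=\alpha_{i}\in\mathbb{R}\setminus\{0\}$, for $i=1,2$.

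Next I invoke micro-causality \eqref{microcaus}: the anti-commutativity $\{A_{1}^{o},A_{2}^{o}\}=0$ yields $(A_{1}^{o}A_{2}^{o})^{*}=A_{2}^{o}A_{1}^{o}=-A_{1}^{o}A_{2}^{o}$, so $iA_{1}^{o}A_{2}^{o}$ and hence $X:=1+iA_{1}^{o}A_{2}^{o}$ are self-adjoint. Anti-commuting $A_{2}^{o}$ past $A_{1}^{o}$ also gives $(A_{1}^{o}A_{2}^{o})^{2}=-(A_{1}^{o})^{2}(A_{2}^{o})^{2}$, so
\[
X^{2}=1+2i\,A_{1}^{o}A_{2}^{o}+(A_{1}^{o})^{2}(A_{2}^{o})^{2},
\]
and evaluating term by term via $\omega(C_{1}C_{2})=\omega_{1}(C_{1})\omega_{2}(C_{2})$,
\[
\omega(X^{2})=1+2i\,\alpha_{1}\alpha_{2}+\omega_{1}\bigl((A_{1}^{o})^{2}\bigr)\,\omega_{2}\bigl((A_{2}^{o})^{2}\bigr).
\]
The third summand is a non-negative real number, while the second is purely imaginary and non-zero by assumption, so $\omega(X^{2})\notin\mathbb{R}$. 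This contradicts $X=X^{*}$, proving that at least one $\omega_{i}$ must annihilate $\mathcal{A}_{i}^{o}$.

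I expect the only delicate point to be the reduction to self-adjoint odd witnesses — concretely, checking that the hermitian and skew-hermitian components of an odd element remain odd, which rests on $\Theta$ being a $*$-automorphism. The heart of the argument, the self-adjoint element $X$ whose square has non-real expectation, is essentially forced by the combination of micro-causality (which makes $A_{1}^{o}A_{2}^{o}$ skew-adjoint) and the product factorization defining $\omega$.
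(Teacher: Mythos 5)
Your proof is correct and follows essentially the same route as the paper: both reduce to self-adjoint odd elements $A_i^o$ with $\omega_i(A_i^o)=\alpha_i\in\mathbb{R}\setminus\{0\}$ and exploit the anti-commutativity $\{A_1^o,A_2^o\}=0$, which makes $A_1^oA_2^o$ skew-adjoint while the factorization forces $\omega(A_1^oA_2^o)=\alpha_1\alpha_2$ to be a non-zero real. The only cosmetic difference is that the paper derives the contradiction directly from hermiticity of $\omega$ (so that $\omega(A_1^oA_2^o)$ would have to be purely imaginary), whereas you package the same fact as a positivity violation for $X=1+iA_1^oA_2^o$; you also spell out the reduction to self-adjoint odd witnesses, which the paper merely asserts.
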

\begin{proof}
If $\omega_{1,2}$ do not vanish on the odd components $\mathcal{A}^o_{1,2}$, there exist  self-adjoint $A^o_i\in\mathcal{A}_i^o$, such that $\omega_i(A^o_i)\neq 0$, $i=1,2$.
Then, the anti-commutativity of the odd elements $A^o_i$ yields a contradiction as
$$
\overline{\omega(A^o_1A^o_2)}= \omega(A^o_2A^o_1)=-\omega(A^o_1A^o_2)=\omega_1(A^o_1)\,\omega(A^o_2)\neq 0\ .
$$
\end{proof}

It thus turns out that, given a bipartition $({\cal A}_1,{\cal A}_2)$ of the fermion algebra $\cal A$,
{\it i.e.} a decomposition of $\cal A$ in the subalgebra ${\cal A}_1$ generated by the first
$m$ modes and the subalgebra ${\cal A}_2$, generated by the
remaining $M-m$ ones, the decomposition (\ref{4}) is meaningful only for local operators
$A_1 A_2$ for which $[A_1,\, A_2]=\,0$, so that, also for fermions, separable states yield linear convex combination of products of mean values on all products of commuting observables.

\subsection{Structure of entangled fermionic states}

As a consequence of the previous Lemma, we also have that if a state $\omega$ on the Fermi algebra $\cal A$ does not vanish on a local operator $A_1^o A_2^o$, with the two components
$A_1^o \in {\cal A}_1^o$, $A_2^o \in {\cal A}_2^o$ both belonging to the odd part
of the two subalgebras, then it is entangled. Indeed, in such a case it cannot be split as in Definition \ref{def3}.

Given a bipartition $({\cal A}_1,{\cal A}_2)$ where the number of modes is $M=2N$, with odd number of fermions $N$ and ${\cal A}_1$, respectively ${\cal A}_2$, is constructed with creation and annihilation operators of the first $N$, respectively the second $N$ modes, a very simple instance of a state with the above characteristics is given by a pure state consisting of the balanced superposition of $N$ fermions in the first $N$ modes and none in the other ones, with no fermions in the first $N$ modes and $N$ in the second ones:
\begin{equation}
\label{entst1}
\vert\Psi\rangle=\frac{1}{\sqrt{2}}\Big(\vert N;0\rangle+\vert0;N\rangle\Big)=\frac{1}{\sqrt{2}}\Big(
a^\dag_1a^\dag_2\cdots a_N^\dag+a^\dag_{N+1}a^\dag_{N+2}\cdots a_{2N}^\dag\Big)\vert 0\rangle\ .
\end{equation}
Consider the product of odd elements $A^0_1A^0_2$, where $A_1^o=a_1a_2\cdots a_N\in{\cal A}_1$ and $A_2^o=a^\dag_{N+1}a^\dag_{N+2}\cdots a^\dag_{2N}\in{\cal A}$; then
\begin{equation}
\label{entst2}
A^0_1A^0_2\vert 0;N\rangle=0\ ,\quad A^0_1A^0_2\vert N;0\rangle\propto\vert0;N\rangle\ \Longrightarrow\ \langle\Psi\vert A_1^oA_2^o\vert \Psi\rangle=\frac{1}{2}\ .
\end{equation}
Thus, in full agreement with its evident entangled structure, the pure state $\vert\Psi\rangle$ is not separable according to the Definition \ref{def3}.

However, if state separability had been defined by asking the factorization in~\eqref{4} only relatively to even (and therefore commuting) operators, it would have followed that, on the algebra generated by ${\cal A}^e_{1,2}$,  the pure state $\vert\Psi\rangle$ coincides with the separable density matrix
$$
\rho_{sep}=\frac{1}{2}\vert N;0\rangle\langle N;0\vert+\frac{1}{2}\vert 0;N\rangle\langle 0;N\vert\ .
$$
Indeed, $N$ odd implies $\langle N;0\vert A^e_1\vert 0;N\rangle=\langle N;0\vert A^e_2\vert 0;N\rangle=0$ for all $A_{1,2}^e$, whence
\begin{eqnarray}
\nonumber
\langle\Psi\vert A_1^eA_2^e\vert \Psi\rangle&=&\frac{1}{2}\Big(\langle N;0\vert A_1^e\vert N;0\rangle\, \langle N;0\vert A_2^e\vert N;0\rangle\, +\,
\langle 0;N\vert A_1^e\vert 0;N\rangle\, \langle 0;N\vert A_2^e\vert 0;N\rangle \Big)\\
&=&{\rm Tr}\Big(\rho_{sep}A^e_1A_2^e\Big)\ ,\qquad \forall A_1^e\in{\cal A}_1^e\ ,\  A_2^e\in{\cal A}_2^e\ .
\label{entst3}
\end{eqnarray}

In relativistic quantum field theory a still open problem is the  relation between the locality of observables  (identified with them being algebraically independent) and their statistical independence which is related to a reference state \cite{Summers,buchsumm}.
More concretely, the issue at stake is to derive the commutativity of two sub-algebras ${\cal A}_{1,2}$ from whether or not the sufficiently many states factorize over product of observables $A_1\in{\cal A}_1$ and $A_2\in{\cal A}_2$.
In this context one has the following definition of uncorrelated states~\cite{buchsumm}.
\medskip

\begin{definition}
Given two (not necessarily disjoint) subalgebras ${\cal A}_{1,2}$ of the Fermi algebra $\cal A$,  a state $\omega$ on $\mathcal A$ is $({\cal A}_1,{\cal A}_2)$-uncorrelated if
\begin{equation}
\label{def4}
\omega(P_1\wedge P_2) = \omega(P_1) \omega(P_2)\ ,
\end{equation}
for every pair of projections $P_1\in{\cal A}_1$ and $P_2\in{\cal A}_2$,
where
\begin{eqnarray}
\nonumber
P_1\wedge P_2&=&\lim_{n\to+\infty}(P_1P_2)^n=\lim_{n\to+\infty} (P_2P_1)^n\\
\label{largproj}
&=&\lim_{n\to+\infty}P_1(P_1P_2P_1)^nP_1=\lim_{n\to+\infty}P_2(P_2P_1P_2)^nP_2
\end{eqnarray}
denotes the largest projector $Q\in{\cal A}$ such that $Q\leq P_1$, $Q\leq P_2$.
\end{definition}

The above definition refers to any pair of subalgebra, that is not necessarily forming an algebraic bipartition in the sense of Definition~\ref{def3}.
It is thus interesting to relate the entangled structure of $\vert\Psi\rangle$ to the above characterization of uncorrelated states: it turns out that, given a bipartition $({\cal A}_1,{\cal A}_2)$, the pure $({\cal A}_1,{\cal A}_2)$-entangled state $\vert\Psi\rangle$ in~\eqref{entst1} is also $({\cal A}_1,{\cal A}_2)$-correlated. This enforces the need of defining fermionic entanglement basing on micro-causality rather than on algebraic independence.

This can be seen by considering the projections
$$
P_1=\frac{1+a_1+a^\dag_1}{2}\in{\cal A}_1 ,\quad P_2=\frac{1+a_{N+1}+a^\dag_{N+1}}{2}\in{\cal A}_2\ ,
$$
constructed with the creation and annihilation operators of the first and $N+1$-th mode that belong to the different subalgebras of the disjoint
pair $({\cal A}_1,{\cal A}_2)$.
It proves convenient to work within the spin-representation provided by the Jordan-Wigner transformation
\begin{eqnarray}
\label{JW1}
a_i&=&\Big(\bigotimes_{j=1}^{i-1}\sigma_z\Big)\otimes\,\sigma_-\,\otimes\Big(\bigotimes_{i+1}^{2N} 1\Big)\ ,\quad
\sigma_-=\frac{\sigma_x-i\sigma_y}{2}\quad\hbox{whereby}\\
\label{JW2}
P_1&=&\frac{1+\sigma_x}{2}\otimes\Big(\bigotimes_{j=2}^{2N}1\Big)\\
\label{JW3}
P_2&=&\frac{1}{2}\Bigg(\Big(\bigotimes_{j=1}^{2N}1\Big)\,+\,
\Big(\bigotimes_{j=1}^{N}\sigma_z\Big)\otimes\sigma_x\otimes\Big(\bigotimes_{j=N+2}^{2N}1\Big)\Bigg)\ .
\end{eqnarray}
Then, one computes
\begin{eqnarray}
\nonumber
P_1P_2P_1&=&\frac{1+\sigma_x}{4}\otimes\Big(\bigotimes_{j=2}^{2N}1\Big)\ .
\end{eqnarray}
The non-zero eigenvalue of $P_1P_2P_1$ is $1/2$: thus $P_1\wedge P_2=0$, $\langle\Psi\vert P_1\wedge P_2\vert\Psi\rangle=0$, while
\begin{equation}
\label{JW5}
\langle\Psi\vert P_1\vert\Psi\rangle=\langle\Psi\vert P_2\vert\Psi\rangle=\frac{1}{4}\ ,
\end{equation}
so that $\vert\Psi\rangle$ cannot fulfill the condition~\eqref{def4} and is thus $({\cal A}_1,{\cal A}_2)$-correlated.

\eject

\section{Conclusions}

In order to properly extend the notions of separability and entanglement to quantum systems consisting of identical particles, an optimal framework is provided by the second quantization formalism. This allows one to resort to the more general mode picture and to abandon the particle based tensor product structure of Hilbert spaces and algebras of observables, valid only for distinguishable particle.

Unlike in the case of bosonic systems, in the case of fermions the fundamental anti-commutativity of creation and annihilation operators of orthogonal modes,
known as micro-causality, conflicts with the notion of algebraic independence of local observables, that is with the fact that they must commute in order to
be simultaneously measurable without interferences.
In order to reconcile micro-causality with algebraic independence one usually restricts oneself to considering even fermionic subalgebras, namely the closures of even polynomials in fermionic creation and annihilation operators. Then, the elements of these subalgebras, even if constructed with anti-commuting creation and annihilation operators of orthogonal modes, do commute

We have shown that, if separability is defined by restricting to commuting even subalgebras, that is to algebraic independence, then an apparently entangled state would be termed separable, whereas the same state is perfectly entangled with respect to a definition of absence of correlations based on micro-causality, namely with reference to disjoint full fermionic subalgebras, that is not only to their even components.

\end{document}